\documentclass[journal,10pt,draftclsnofoot,onecolumn]{IEEEtran}
\usepackage{url}
\usepackage{graphics}
\usepackage{amsmath,amsfonts,amssymb,epsfig}
\usepackage{lscape}
\usepackage{latexsym}

\newcommand{\MZ}{\mathbb{Z}}

\newcommand{\MQ}{\mathbb{Q}}

\newtheorem{theorem}{\bf Theorem}
\newtheorem{lemma}[theorem]{\bf Lemma}

\newtheorem{corollary}[theorem]{\bf Corollary}

\newtheorem{definition}[theorem]{\sc Definition}

\newenvironment{remark}{{\noindent \bf Remark:}}{}
\newenvironment{proof}{{\sc Proof.}}{\hspace*{\fill}$\Box$\par\vspace{4mm}}

\begin{document}

\title{On a Connection between Ideal Two-level Autocorrelation and Almost Balancedness of $p$-ary Sequences*}

\author{Yuri L. Borissov\\
\thanks{Yuri L. Borissov is with the Dept. of Mathematical Foundations of Informatics, Institute of Mathematics and Informatics,
Bulgarian Academy of Sciences, 8 G. Bontchev Str., 1113 Sofia, Bulgaria.

* This paper was submitted to Comptes rendus de l'Academie Bulgare des Sciences.}
}

\maketitle

\begin{abstract}\noindent
In this correspondence, for every periodic $p-$ary sequence satisfying ideal two-level autocorrelation property the existence of an element of the field ${\bf GF}(p)$ which appears one time less than all the rest that are equally distributed in a period of that sequence, is proved by algebraic method. In addition, it is shown that such a special element might not be only the zero element but as well arbitrary element of that field.

\end{abstract}

\begin{keywords}
almost balancedness, ideal two-level autocorrelation property.
\end{keywords}

\section{\bf Introduction}
\label{sect1} Balancedness and low out-of-phase autocorrelation
values are properties which one expects to find in randomly
generated sequences. However, in some engineering applications
(e.g., CDMA communication, radar, cryptography etc.) the
so-called pseudo-random sequences (which are generated
deterministically) are preferable because of the possible
deviations and to avoid data storage requirements. Among these
sequences we distinguish the {\bf m}-sequences, the GMW sequences
and their derivatives (both binary and non-binary) and many other
types of sequences having the properties which mimic those of
truly random sequences (see, e.g., \cite{Golomb} --
\cite{Tilborg}). In this correspondence, the relationship between two of
these properties: the ideal two-level autocorrelation property and
the almost balancedness in $p-$ ary case, is investigated. In the
binary case, it has been proved that the number of ones and zeroes
in a sequence with the two-level autocorrelation differ by 1 (see,
e.g., \cite{Golomb} or \cite{Tilborg}[Chapter $3$]). A similar
situation is observed in the general case ($p$ arbitrary prime)
and it is treated algebraically in this article.

The paper is organized as follows. In next section, we recall some
definitions and preliminaries, and in Section $3$, we present
our results.

\section{\bf Definitions and Preliminaries}
\label{sect2} Let ${\bf a} = \{a_{n}\}, n = 0,1,\ldots,N-1$ be a
sequence of length $N$ with entries from some finite set ${\bf S}$. For
arbitrary $c \in {\bf S}$, define $\mu_{c} = \vert n :\; a_{n} =
c,\; 0 \leq n \leq N-1 \vert$,  and call $\mu_{c}$ multiplicity of
$c$ (or frequency of appearance) in ${\bf a}$. Clearly, we have:
\begin{eqnarray}\label{eq.1}
\sum_{c \in {\bf S}}\mu_{c} = N
\end{eqnarray}

Sequences with entries from the finite field ${\bf GF}(p)$
for a prime $p$, are called sometimes $p-$ary sequences.
In this correspondence, we consider mostly pure periodic $p-$ary sequences.

In order to present our result we need to recall the following two
definitions.

\begin{definition}\label{balanced}
Let $p$ be a prime and $t$ be some positive integer. A $p-$ary
sequence ${\bf a}$ of period $N = pt-1$, is called almost balanced
if there exists $c \in {\bf GF}(p)$ with multiplicity $t-1$, while
all other elements of the field are of multiplicity $t$ in one
period of ${\bf a}$. We will also say that $c$ has an exceptional
frequency of appearance in a period of ${\bf a}$.
\end{definition}

Let $\omega$ denotes the $p-$th primitive root of unity $e^{2 \pi i/p}$. Then
${\cal R}_{p} = \{ 1,\omega,\ldots,\omega^{p-1}\}$ is the set of
all $p-$th roots of unity. By identifying ${\bf GF}(p)$ with
$\MZ_{p}$, one can put into correspondence to any $p-$ary sequence ${\bf a} = \{a_{n}\}, n =
0,1,\ldots$ the complex sequence ${\bf s} = \{s(n) = \omega^{a_{n}}\}, n = 0,1,\ldots$
whose entries are from ${\cal R}_{p}$.

Further on, we shall use the notation $\alpha^{*}$ for the
conjugate of a complex number $\alpha$.
\begin{definition}(see, e.g., \cite{NoGo})\label{def.2}
The complex sequence  ${\bf s} = \{s(n)\}$ of period $N$ is said
to have the ideal two-level autocorrelation property if its
autocorrelation function $R_{\bf s}(k)$ satisfies the following:
\begin{eqnarray*}
{R_{\bf s}(k)} =  {{\hspace*{-0.00001cm} N, \;\;\;\;\; {\rm if}\;\; k \equiv 0\; mod\;N,} \atopwithdelims \{.  {\hspace*{-1cm} -1, \;\;\; {\rm  otherwise,}}}
\end{eqnarray*}
where $R_{\bf s}(k)$ is defined as follows:
\begin{equation*}
R_{\bf s}(k) = \sum_{n=0}^{N-1}s(n)s^{*}(n+k).
\end{equation*}
\end{definition}

When ${\bf a}$ is a $p-$ary sequence with corresponding complex
sequence possessing the ideal two-level autocorrelation property,
we shall say that ${\bf a}$ also possesses this property.

\section{\bf Results}
\label{sect3}

We start with the following
\begin{lemma}\label{lemma1}
Let ${\bf a} = \{a_{n}\}, n = 0,1,\ldots$ be a $p-$ary sequence
${\bf a}$ of period $N$ having the ideal two-level autocorrelation
property. Then there exists a positive integer $t$ such that $N = pt-1$.
\end{lemma}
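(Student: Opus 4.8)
The plan is to read off the conclusion from a \emph{single} out‑of‑phase autocorrelation value, not from any average. It is worth noting first why the obvious global approach is too weak: summing over a full period gives $\sum_{k=0}^{N-1}R_{\bf s}(k)=N+(N-1)(-1)=1$, while the same sum equals $\bigl|\sum_{n}s(n)\bigr|^2$; writing $\sum_n s(n)=\sum_{c\in\MZ_p}\mu_c\omega^c\in\MZ[\omega]$ and reducing modulo the prime of $\MZ[\omega]$ above $p$ (under which $\omega\equiv 1$) only yields $N^2\equiv 1\ (\mathrm{mod}\ p)$, i.e. $N\equiv\pm 1$, which is not sharp. Fixing one shift repairs this.

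So I would fix any $k$ with $1\le k\le N-1$ (such $k$ exists since $N\ge 2$; $N=1$ has no out‑of‑phase shift) and pass to the difference sequence ${\bf a}^{(k)}=\{a_n-a_{n+k}\}$ over $\MZ_p$. Its associated complex sequence is $\{s(n)s^{*}(n+k)\}=\{\omega^{a_n-a_{n+k}}\}$, so Definition~\ref{def.2} gives
\[
-1=R_{\bf s}(k)=\sum_{n=0}^{N-1}\omega^{a_n-a_{n+k}}=\sum_{c\in\MZ_p}\mu^{(k)}_c\,\omega^{c},
\]
where $\mu^{(k)}_c$ is the multiplicity of $c$ in one period of ${\bf a}^{(k)}$, a non‑negative integer, and $\sum_c\mu^{(k)}_c=N$ by~(\ref{eq.1}). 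Using $-1=\omega+\omega^2+\cdots+\omega^{p-1}$, the identity rewrites as $\mu^{(k)}_0+\sum_{c=1}^{p-1}\bigl(\mu^{(k)}_c-1\bigr)\omega^{c}=0$.

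The crux is now a statement about $\MZ[\omega]$ alone: since $p$ is prime, the minimal polynomial of $\omega$ over $\MQ$ is $1+x+\cdots+x^{p-1}$, of degree $p-1$, so any $\MZ$‑linear relation among $1,\omega,\ldots,\omega^{p-1}$ is a scalar multiple of $(1,1,\ldots,1)$. Applied to the displayed relation, this forces $\mu^{(k)}_1=\cdots=\mu^{(k)}_{p-1}$; calling this common value $t$ we also get $\mu^{(k)}_0=t-1$, and $t\ge 1$ because the $\mu^{(k)}_c$ are multiplicities. Finally~(\ref{eq.1}) applied to ${\bf a}^{(k)}$ yields $N=\sum_c\mu^{(k)}_c=(t-1)+(p-1)t=pt-1$, which is the claim. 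The only real obstacle is the initial reformulation — realizing one should extract the exact multiplicity profile of a single difference sequence rather than aggregate autocorrelation values; after that the result is immediate from the irreducibility of the $p$‑th cyclotomic polynomial together with nonnegativity of the $\mu^{(k)}_c$. (Incidentally, the argument uses only that $R_{\bf s}(k)=-1$ for some one shift $k\not\equiv 0$.)
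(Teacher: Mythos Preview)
Your argument is correct and coincides with the paper's: both fix a single out-of-phase shift (the paper takes $k=1$, you allow any $k$), form the difference sequence, and use that the only integer relation among $1,\omega,\dots,\omega^{p-1}$ is a multiple of $(1,\dots,1)$ (equivalently, $\Phi_p\mid f$) to force $\mu^{(k)}_0=t-1$, $\mu^{(k)}_1=\cdots=\mu^{(k)}_{p-1}=t$, hence $N=pt-1$. Your preliminary remark that the aggregated identity $\sum_k R_{\bf s}(k)=\bigl|\sum_n s(n)\bigr|^2=1$ only gives $N\equiv\pm1\pmod p$ is a nice motivational aside not present in the paper, but the proof proper is the same.
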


\begin{proof}
Let ${\bf s} = \{s(n) = \omega^{a_{n}}\}, n = 0,1,\ldots$ be the
complex sequence corresponding to ${\bf a}$. Define
${\bf u} = \{u(n) = s(n)s^{*}(n+1)\}, n = 0,1,...,N-1$.
Obviously, $u(n) \in {\cal R}_{p}$, and therefore $u(n) =
\omega^{k_{n}}$ for some $0 \leq k_{n} < p$. Then according to the
definition of the ideal two-level autocorrelation property, we
have: $R_{\bf s}(1) = \sum_{n=0}^{N-1}u(n) = -1$, or equivalently
$\sum_{n=0}^{N-1}\omega^{k_{n}} + 1 = 0$. Let $\mu_{k}$ be the
multiplicity of $\omega^{k}$ in ${\bf u}$, for $0 \leq k < p$ .
Hence, from the previous equation, we get:
\begin{eqnarray*}
\sum_{k=p-1}^{1}\mu_{k}\omega^{k} + \mu_{0} + 1 = 0
\end{eqnarray*}
Thus, $\omega$ is a root of the polynomial $f({\bf x}) =
\sum_{k=p-1}^{1}\mu_{k}{\bf x}^{k} + \mu_{0} + 1$. Since the minimal polynomial of
$\omega$ (over the field of rational
numbers $\MQ$) is the $p-$th cyclotomic polynomial $\Phi_{p}({\bf x}) = {\bf x}^{p-1} + {\bf x}^{p-2} +
\ldots +{\bf x} + 1$, applying the well-known fact from Abstract Algebra (see, e.g., \cite{Obr}),
we conclude that $\Phi_{p}({\bf x})$ divides $f({\bf x})$.
Further, since $p - 1 = deg(\Phi_{p}) \geq
deg(f)$ and $f$ has integer coefficients, there exists some
positive integer $t$ for which $\mu_{k} = t$ when $1 \leq k < p$
and $\mu_{0} + 1 = t$. Finally, by equation (\ref{eq.1}), we have $N =
\sum_{k=0}^{p-1}\mu_{k} = t-1 +(p-1)t = pt-1$, and the proof is
completed.
\end{proof}

\begin{remark} The above Lemma shows that a $p-$ary
sequence possessing the ideal two-level autocorrelation property
cannot be strictly balanced.
It is well known that if a binary sequence has
the ideal two-level autocorrelation, it must have a period $N$
with $N \equiv -1\;(mod\;4)$ (see, e.g., \cite{KimSong}). Of
course, this fact is stronger than the claim of Lemma
\ref{lemma1} in case $p=2$.
Also, in case $p > 2$, for all known periodic sequences with the
ideal two-level autocorrelation, the period $N$ is of the form
$p^{m}-1$ for some positive $m$, while in the binary case there
exist examples when it is of different type (see, e.g., \cite{Go}).
But here, we will not discuss the details of this topic.
\end{remark}

\begin{lemma}\label{lemma2}
Let $A_{n}, 0 \leq n < k$ be $k$ positive integers which obey the two equations:
\begin{eqnarray}\label{eq.3}
\sum_{n=0}^{k-1}(A_{n} - A_{n+1})^{2} = 2,\;if \; k \geq 2
\end{eqnarray}
\begin{eqnarray}\label{eq.4}
\sum_{n=0}^{k-1}(A_{n} - A_{n+2})^{2} = 2,\;if \;k \geq 4
\end{eqnarray}
where $n+j,\; j = 1,2$ are taken modulo $k$. Then there exists an
index $m:\;0 \leq m < k$ and a positive integer $A$, such that
$A_{m} = A \pm 1$, while all other $A_{n}$s are equal to $A$.
\end{lemma}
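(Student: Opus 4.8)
The plan is to analyze the first-difference sequence and then use the second hypothesis to eliminate every configuration except the desired one. Set $d_n = A_n - A_{n+1}$ for $0 \le n < k$, with indices taken modulo $k$. First I would dispose of $k = 1$ trivially: take $A = A_0 + 1$, so $A_0 = A - 1$ and there are no other terms. For $k \ge 2$, summing the $d_n$ around the cycle telescopes to $\sum_{n=0}^{k-1} d_n = 0$, while (\ref{eq.3}) gives $\sum_{n=0}^{k-1} d_n^2 = 2$. Since the $d_n$ are integers and $2$ is not a perfect square, the only possibility is that exactly two of the $d_n$ equal $\pm 1$ and the rest vanish; combined with $\sum d_n = 0$, there are distinct indices $i \ne j$ with $d_i = 1$, $d_j = -1$, and $d_n = 0$ otherwise.

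Next I would read off the shape of $(A_n)$ from $(d_n)$. Writing $A_m = A_0 - \sum_{n=0}^{m-1} d_n$ and tracking the partial sums, which take only the values $0$ and $1$, one sees that $A_n$ equals $A_0$ on one cyclic block and $A_0 - 1$ on the complementary block; denote their lengths by $k-\ell$ and $\ell$, where $1 \le \ell \le k-1$ since $d_i \ne 0$ and $d_j \ne 0$ force both blocks to be nonempty (in particular $A_0 - 1 \ge 1$, as some $A_n$ equals it). If $\ell = 1$, the single low-value entry is the exceptional one and $A = A_0$ works; if $\ell = k-1$, the single high-value entry is exceptional and $A = A_0 - 1$ works. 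For $k = 2$ and $k = 3$ one has automatically $\ell \in \{1, k-1\}$, so the lemma already follows in those cases, and it remains only to force $\ell \in \{1, k-1\}$ when $k \ge 4$.

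For $k \ge 4$ I would bring in (\ref{eq.4}) through the identity $A_n - A_{n+2} = (A_n - A_{n+1}) + (A_{n+1} - A_{n+2}) = d_n + d_{n+1}$, so that (\ref{eq.4}) becomes $\sum_{n=0}^{k-1}(d_n + d_{n+1})^2 = 2$. Suppose for contradiction that $2 \le \ell \le k-2$; this is precisely the statement that the two jump positions $i$ and $j$ are cyclically non-adjacent, equivalently that the four indices $i-1, i, j-1, j$ are pairwise distinct. A direct check then gives $d_n + d_{n+1} = +1$ at $n = i-1$ and $n = i$, $d_n + d_{n+1} = -1$ at $n = j-1$ and $n = j$, and $d_n + d_{n+1} = 0$ for every other $n$; hence $\sum_{n}(d_n + d_{n+1})^2 = 4$, contradicting (\ref{eq.4}). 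Therefore $\ell \in \{1, k-1\}$, and the conclusion follows from the case analysis above.

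The routine parts are the index bookkeeping modulo $k$: the partial-sum computation, the wrap-around situation when $i > j$, and the verification of the four values of $d_n + d_{n+1}$. The one point deserving care is the equivalence between ``$2 \le \ell \le k-2$'' and ``the four indices $i-1, i, j-1, j$ are distinct'', since it is exactly this equivalence that turns the gap between $2$ and $4$ into the proof; I do not anticipate any deeper obstacle.
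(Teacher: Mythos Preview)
Your proof is correct and follows essentially the same line as the paper's: equation~(\ref{eq.3}) forces exactly two nonzero unit jumps and hence a two-block cyclic structure (settling $k\le 3$), and equation~(\ref{eq.4}) then rules out both blocks having length $\ge 2$ by producing four nonzero second-difference terms instead of two. Your introduction of the first-difference sequence $d_n$ and the telescoping identity $\sum_n d_n=0$ makes explicit a step the paper leaves to the picture of the $A_n$ on a circle, but the argument is otherwise the same.
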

\begin{proof}
Imagine that $A_{0},A_{1}.\ldots,A_{k-1}$ are written alongside a
circle. Since the differences $A_{n}-A_{n+j},\;j=1,2$ are integer
numbers, in each of the given equations there exist exactly two
terms $(A_{n}-A_{n+j})^{2}$ equal to $1$, while all other terms
have to be equal to $0$.  Thus equation (\ref{eq.3}) implies that the
$A_{n}$s are divided into two nonempty groups of equal adjacent
numbers and the common values (for each group its own) differ by
$1$ (So, the proof is completed in cases $k=2,3$). Moreover,
taking into account equation (\ref{eq.4}), we conclude that one of the
groups has to contain a single number, since otherwise four terms
$(A_{n}-A_{n+2})^{2}$ will be equal to $1$. This completes the
proof in case $k > 3$.
\end{proof}

Furthermore, we prove the following theorem.

\begin{theorem}\label{th.1}
Let ${\bf s} = \{s(n)\}, n = 0,1,\ldots$ be complex sequence of
period $N$ having the ideal two-level autocorrelation property.
Then it holds $\vert \sum_{n=0}^{N-1} s(n) \vert = 1$.
\end{theorem}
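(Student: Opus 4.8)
The plan is to compute $\left|\sum_{n=0}^{N-1} s(n)\right|^2$ by relating it to the autocorrelation values, exploiting the constraint that each $s(n)$ lies in the finite set $\mathcal{R}_p$ of $p$-th roots of unity. Write $T = \sum_{n=0}^{N-1} s(n)$. Then $|T|^2 = T T^{*} = \sum_{n,m} s(n)s^{*}(m) = \sum_{k=0}^{N-1}\sum_{n=0}^{N-1} s(n)s^{*}(n+k) = \sum_{k=0}^{N-1} R_{\bf s}(k)$, where indices are taken modulo $N$. By the ideal two-level autocorrelation property this equals $N + (N-1)(-1) = 1$. Hence $|T| = 1$ directly, once one checks that the rearrangement into a sum over the shift parameter $k$ is valid (which it is, since it is just a finite reindexing of the double sum).

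The point worth emphasizing is that this argument does not even need Lemma \ref{lemma1} or Lemma \ref{lemma2}; it is a short manipulation of the definition of $R_{\bf s}$. The only step requiring care is the interchange of summation order and the identification $\sum_n s(n) s^{*}(n+k) = R_{\bf s}(k)$ for every residue $k$, including $k \equiv 0$, where it gives $R_{\bf s}(0) = \sum_n |s(n)|^2 = N$ because $|s(n)| = 1$. So the fact that the entries are unit-modulus (being roots of unity) is used precisely here.

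I anticipate the main obstacle — if there is one — to be purely a matter of bookkeeping with the periodic index shifts: one must be sure that as $k$ ranges over $0,1,\dots,N-1$ and $n$ ranges over $0,1,\dots,N-1$, the pairs $(n, n+k \bmod N)$ hit every ordered pair in $\{0,\dots,N-1\}^2$ exactly once. This is immediate from the fact that for each fixed $n$ the map $k \mapsto n+k \bmod N$ is a bijection of $\mathbb{Z}_N$. Given that, the chain of equalities
\begin{eqnarray*}
\left|\sum_{n=0}^{N-1} s(n)\right|^2 = \sum_{k=0}^{N-1} R_{\bf s}(k) = N - (N-1) = 1
\end{eqnarray*}
completes the proof, and taking the nonnegative square root yields $\left|\sum_{n=0}^{N-1} s(n)\right| = 1$. (One could alternatively invoke Lemma \ref{lemma1} to write $N = pt-1$ and interpret the $-1$ combinatorially, but that is not needed for this statement.)
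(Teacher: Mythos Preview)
Your proof is correct and follows essentially the same route as the paper: both identify $\sum_{k=0}^{N-1} R_{\bf s}(k)$ with $\left|\sum_{n=0}^{N-1} s(n)\right|^{2}$ via the double-sum reindexing and then evaluate using the two-level values to obtain $N-(N-1)=1$. (A minor aside: $R_{\bf s}(0)=N$ is already part of the hypothesis in Definition~\ref{def.2}, so the appeal to the unit-modulus property of the entries is unnecessary here.)
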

\begin{proof}
By straightforward computations taking into account the definition
of autocorrelation function, we have:
\begin{eqnarray*}
\sum_{k=0}^{N-1}R_{\bf s}(k) = \sum_{k=0}^{N-1}\sum_{n=0}^{N-1}s(n)s^{*}(n+k) =
\sum_{n=0}^{N-1}s(n)(\sum_{k=0}^{N-1}s^{*}(n+k)) =
\end{eqnarray*}
\begin{eqnarray*}
\sum_{n=0}^{N-1}s(n)(\sum_{k=0}^{N-1}s^{*}(k)) =
\sum_{n=0}^{N-1}s(n) \times \sum_{k=0}^{N-1}s^{*}(k) = \sum_{n=0}^{N-1}s(n) \times (\sum_{k=0}^{N-1}s(k))^{*} =
\vert \sum_{n=0}^{N-1} s(n) \vert^{2}.
\end{eqnarray*}
On the other hand, by Definition \ref{def.2} we have that:
$\sum_{k=0}^{N-1}R_{\bf s}(k) = 1$. So, $\vert \sum_{n=0}^{N-1} s(n) \vert^{2} = 1$ which completes the proof.
\end{proof}
From Theorem \ref{th.1}, one can immediately derive the
following.
\begin{corollary}\label{cor.1}
Let ${\bf a}$ be a $p-$ary sequence of period $N$, and $\mu_n$ be
the frequency of appearance of $n,\; 0 \leq n \leq p-1,$ in one
period of ${\bf a}$. If in addition, ${\bf a}$ has the ideal two-level autocorrelation property then it holds:
\begin{eqnarray}\label{eq.2}
\vert \sum_{n=0}^{p-1} \mu_{n}\omega^{n} \vert^{2} = 1.
\end{eqnarray}
\end{corollary}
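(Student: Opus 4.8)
The plan is to rewrite the sum $\sum_{n=0}^{N-1}s(n)$ by grouping the $N$ terms according to the value $a_n\in{\bf GF}(p)$ takes. Since $s(n)=\omega^{a_n}$ and there are exactly $\mu_c$ indices $n$ in a period with $a_n\equiv c$, collecting equal contributions immediately yields the identity
\begin{eqnarray*}
\sum_{n=0}^{N-1}s(n) \;=\; \sum_{c=0}^{p-1}\mu_{c}\,\omega^{c}.
\end{eqnarray*}
This is just a reindexing of a finite sum, so no estimate or cyclotomic argument is needed at this stage; one only has to remember that $\sum_{c}\mu_c = N$ (equation~(\ref{eq.1})), which is what makes the two sums range over the same collection of terms.

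Next I would invoke Theorem~\ref{th.1}. Because ${\bf a}$ has the ideal two-level autocorrelation property, its associated complex sequence ${\bf s}=\{\omega^{a_n}\}$ has that property by the convention fixed just before Section~\ref{sect3}, so Theorem~\ref{th.1} applies verbatim to ${\bf s}$ and gives $\big\vert\sum_{n=0}^{N-1}s(n)\big\vert = 1$, hence $\big\vert\sum_{n=0}^{N-1}s(n)\big\vert^{2}=1$. Substituting the identity from the first step then produces exactly $\big\vert\sum_{n=0}^{p-1}\mu_n\omega^n\big\vert^{2}=1$, which is~(\ref{eq.2}).

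There is essentially no obstacle here: the corollary is a direct specialization of Theorem~\ref{th.1} to complex sequences of the special form $\omega^{a_n}$, and the only thing to check carefully is that the passage from ${\bf a}$ to ${\bf s}$ is legitimate (it is, by definition) and that the grouping of terms is valid (it is, by the definition of the multiplicities $\mu_c$). The sole place where one might trip is bookkeeping with the index range $0\le n\le p-1$ versus $0\le n\le N-1$, but since $\mu_c=0$ is impossible to ignore and all $p$ residues are accounted for, the two displays match term for term. Thus the proof is a two-line deduction and I would present it as such.
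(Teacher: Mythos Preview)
Your proposal is correct and matches the paper's own treatment: the paper simply states that the corollary follows immediately from Theorem~\ref{th.1}, and your argument spells out exactly that immediate deduction---grouping $\sum_{n=0}^{N-1}s(n)$ by the value of $a_n$ to obtain $\sum_{c=0}^{p-1}\mu_c\omega^c$ and then squaring the modulus.
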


Now, we will prove the main theorem of this correspondence.
\begin{theorem} \label{th.2}
Let the complex sequence ${\bf s} = \{s(n)\}, n = 0,1,\ldots$
corresponding to a $p-$ary sequence ${\bf a}$ of period $N$ has
the ideal two-level autocorrelation property. Then ${\bf a}$ is
almost balanced.
\end{theorem}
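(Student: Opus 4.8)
The plan is to reinterpret Corollary~\ref{cor.1} as a pair of discrete autocorrelation identities for the vector of multiplicities $(\mu_0,\ldots,\mu_{p-1})$ and then invoke Lemma~\ref{lemma2}. First I would use Lemma~\ref{lemma1} to fix a positive integer $t$ with $N = pt-1$. Next, by Corollary~\ref{cor.1} equation (\ref{eq.2}) holds, and expanding its left-hand side gives
\begin{eqnarray*}
1 \;=\; \vert \sum_{n=0}^{p-1}\mu_n\omega^n \vert^{2} \;=\; \sum_{n=0}^{p-1}\sum_{m=0}^{p-1}\mu_n\mu_m\,\omega^{n-m} \;=\; \sum_{k=0}^{p-1}c_k\,\omega^{k},
\end{eqnarray*}
where $c_k := \sum_{n=0}^{p-1}\mu_n\mu_{n+k}$, the index $n+k$ being read modulo $p$, so that $c_0 = \sum_{n}\mu_n^{2}$.

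Now I would reuse almost verbatim the cyclotomic-polynomial argument from the proof of Lemma~\ref{lemma1}: the displayed equation says that $\omega$ is a root of $g({\bf x}) = (c_0-1) + \sum_{k=1}^{p-1}c_k{\bf x}^{k} \in \MZ[{\bf x}]$, so $\Phi_{p}({\bf x})$ divides $g({\bf x})$; since $\deg g \le p-1 = \deg \Phi_{p}$ and $\Phi_p$ is monic, $g = q\,\Phi_{p}$ for a constant $q$, whence $c_1 = c_2 = \cdots = c_{p-1} = q$ and $c_0 = q+1$. In particular $c_0 - c_1 = c_0 - c_2 = 1$.

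From here the two hypotheses of Lemma~\ref{lemma2} follow immediately with $A_n := \mu_n$ and $k := p$, since
\begin{eqnarray*}
\sum_{n=0}^{p-1}(\mu_n-\mu_{n+1})^{2} = 2c_0 - 2c_1 = 2, \qquad \sum_{n=0}^{p-1}(\mu_n-\mu_{n+2})^{2} = 2c_0 - 2c_2 = 2,
\end{eqnarray*}
the second identity being needed only when $p \ge 5$. Lemma~\ref{lemma2} then produces an index $m$ and a positive integer $A$ with $\mu_m = A \pm 1$ and $\mu_n = A$ for $n \ne m$. It remains to pin down $A$ via the count (\ref{eq.1}): $pt - 1 = N = \sum_{n=0}^{p-1}\mu_n = pA \pm 1$. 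The sign $+$ would force $p\,(t-A)=2$, hence $p=2$, $A=t-1$, and then the unique index other than $m$ carries multiplicity $t-1$ while the rest carry $t$; the sign $-$ gives $A=t$ and $\mu_m = t-1$. In both cases exactly one element of ${\bf GF}(p)$ has multiplicity $t-1$ and the others have multiplicity $t$, i.e., ${\bf a}$ is almost balanced.

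I expect the only genuine obstacle to be the first two steps — recognising (\ref{eq.2}) as an autocorrelation condition on $(\mu_0,\ldots,\mu_{p-1})$ and squeezing out the equalities $c_1=\cdots=c_{p-1}$ and $c_0=c_1+1$; after that, Lemma~\ref{lemma2} and the counting argument are routine. Minor care is needed for the small primes $p=2,3$ (where Lemma~\ref{lemma2} relies on (\ref{eq.3}) alone) and for resolving the sign in $\mu_m = A\pm 1$, handled as above.
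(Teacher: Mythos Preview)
Your proposal is correct and follows essentially the same route as the paper: expand $\lvert\sum_n\mu_n\omega^n\rvert^2$ as a polynomial in $\omega$ with coefficients the cyclic correlations $c_k=\sum_n\mu_n\mu_{n+k}$, use the minimality of $\Phi_p$ to force $c_0=c_1+1=\cdots=c_{p-1}+1$, deduce $\sum_n(\mu_n-\mu_{n+j})^2=2$ for $j=1,2$, and apply Lemma~\ref{lemma2}. Your resolution of the sign $\mu_m=A\pm1$ via the count $N=pt-1$ is exactly what the paper does when it says ``Lemma~\ref{lemma1} excludes the possibility that $\mu_m=\mu+1$ when $p>2$,'' only spelled out more explicitly.
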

\begin{proof}
Let $\mu_{n}$ be the multiplicity of $n \in \MZ_{p}$ in one period
of ${\bf a}$. We will rewrite the left-hand side of the equation
(\ref{eq.2}) from Corollary \ref{cor.1} as follows:
\begin{eqnarray*}
\vert \sum_{n=0}^{p-1} \mu_{n}\omega^{n} \vert^{2} = (\sum_{n=0}^{p-1} \mu_{n}\omega^{n})(\sum_{n=0}^{p-1} \mu_{n}\omega^{n})^{*} = (\sum_{n=0}^{p-1} \mu_{n}\omega^{n})(\sum_{n=0}^{p-1} \mu_{n}\omega^{p-n}) = \pi(\omega)
\end{eqnarray*}
Further, arranging $\pi(\omega)$ according to the powers of $\omega$, we get:
\begin{eqnarray*}
\pi(\omega) = \sum_{n=0}^{p-1}B_{n}\omega^{p-n},
\end{eqnarray*}
where
\begin{eqnarray}\label{eq.5}
B_{n} = \sum_{k=0}^{p-1}\mu_{k}\mu_{k+n},\; n = 0,1,\ldots,p-1.
\end{eqnarray}
On the other hand, by equation (\ref{eq.2}), we have: $\pi(\omega) = 1,$ or
\begin{eqnarray*}
\sum_{n=1}^{p-1}B_{n}\omega^{p-n}+ \sum_{n=0}^{p-1}\mu_{n}^{2} - 1 = 0.
\end{eqnarray*}
So, $\omega$ is a root of polynomial $f({\bf x}) =
\sum_{n=1}^{p-1}B_{n}{\bf x}^{p-n}+ \sum_{n=0}^{p-1}\mu_{n}^{2} -
1$. Proceeding with polynomials $f$ and $\Phi_{p}$ like in the proof of
Lemma \ref{lemma1}, we conclude that there exists a positive
integer $C$ which satisfies the following equations:
\begin{eqnarray*}
B_{n} = C,\;for\; n = p-1,\ldots,2,1
\end{eqnarray*}
\begin{eqnarray*}
\sum_{n=0}^{p-1}\mu_{n}^{2} = C+1
\end{eqnarray*}
Subtracting from the last equation the two formers and taking into account equation (\ref{eq.5}), we get:
\begin{eqnarray*}
\sum_{n=0}^{p-1}\mu_{n}^{2} - B_{j} = {1 \over 2} \sum_{n=0}^{p-1}(\mu_{n} - \mu_{n+j})^{2} = 1,\;for\;j=1,2,
\end{eqnarray*}
or equivalently
\begin{eqnarray}
\sum_{n=0}^{p-1}(\mu_{n} - \mu_{n+j})^{2} = 2,\;for\;j=1,2.
\end{eqnarray}
Now, by Lemma \ref{lemma2}, it follows the existence of an index
$m:\;0 \leq m \leq p-1$ and a positive integer $\mu$ such that
$\mu_{m} = \mu \pm 1$, while all other $\mu_{n}$s are equal to
$\mu$. Finally, Lemma \ref{lemma1} excludes the possibility
that $\mu_{m} = \mu +1$ when $p > 2$ (Note that if $p=2$ both
possibilities look the same). Therefore, the sequence ${\bf a}$ is
almost balanced.
\end{proof}

To the best of our knowledge, all known constructions of $p-$ary
sequences, $p > 2$, possessing the ideal two-level autocorrelation
property have the peculiarity to provide almost balanced sequences
for which the frequency of appearance of zero is exceptional, i.e.
zero appears once less than the other elements of ${\bf GF}(p)$
(see, e.g., \cite{SchWe} -- \cite{FanDar}). Moreover, in another
two papers \cite{GongSong1} and \cite{GongSong2} (among other things)
it is claimed that "the balance property has been proved in \cite{LudGong}
assuming only the ideal two-level autocorrelation function (whenever $q = p > 2$ is an odd prime)".
And the balance property of a $p-$ary sequence of period $p^{m}-1$ is defined as
that zero appears $p^{m-1}-1$ times while any nonzero element of ${\bf GF}(p)$
appears $p^{m-1}$ times in one period.

But as the next theorem shows that feature is not common
for sequences with the ideal two-level autocorrelation property whatever might be the period.

\begin{theorem}\label{th.3}
Let ${\bf a^{\prime}} = \{a^{\prime}_{n}\}, n = 0,1,\ldots$ be a $p-$ary sequence
of period $N$ having the ideal two-level autocorrelation property
and let the element $c^{\prime} \in {\bf GF}(p)$ be with exceptional frequency of
appearance in one its period.
Define ${\bf a}$ as $\{a^{\prime}_{n} - c^{\prime} + c\}, n = 0,1,\ldots$,
where $c$ is an arbitrary element of ${\bf GF}(p)$.
Then the latter sequence satisfies the ideal two-level autocorrelation property too,
having the element $c$ with exceptional frequency of appearance in one period.
\end{theorem}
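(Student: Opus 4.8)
The plan is to exploit the elementary observation that adding a fixed constant $d := c - c' \in \mathbf{GF}(p)$ to every term of a $p$-ary sequence corresponds, at the level of the associated complex sequences, to multiplication by the fixed $p$-th root of unity $\omega^{d}$. Concretely, if ${\bf s^{\prime}} = \{s^{\prime}(n) = \omega^{a^{\prime}_{n}}\}$ is the complex sequence attached to ${\bf a^{\prime}}$, then the complex sequence attached to ${\bf a} = \{a^{\prime}_{n} - c^{\prime} + c\}$ is ${\bf s} = \{s(n) = \omega^{a^{\prime}_{n} + d} = \omega^{d} s^{\prime}(n)\}$, i.e. ${\bf s}$ is just a global scalar multiple of ${\bf s^{\prime}}$.

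First I would check that this scaling leaves the autocorrelation function unchanged. Substituting into the definition from Definition \ref{def.2},
\begin{eqnarray*}
R_{\bf s}(k) = \sum_{n=0}^{N-1} s(n) s^{*}(n+k) = \sum_{n=0}^{N-1} \omega^{d} s^{\prime}(n)\, \omega^{-d} s^{\prime *}(n+k) = \sum_{n=0}^{N-1} s^{\prime}(n) s^{\prime *}(n+k) = R_{\bf s^{\prime}}(k),
\end{eqnarray*}
because $(\omega^{d})^{*} = \omega^{-d}$ and $\omega^{d}\omega^{-d} = 1$. Hence $R_{\bf s}(k) = R_{\bf s^{\prime}}(k)$ for all $k$; since ${\bf s^{\prime}}$ has the ideal two-level autocorrelation property, so does ${\bf s}$, and therefore ${\bf a}$ has this property as well.

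Next I would track the frequencies of appearance. Writing $\mu^{\prime}_{j}$ and $\mu_{j}$ for the multiplicities of $j \in \MZ_{p}$ in one period of ${\bf a^{\prime}}$ and ${\bf a}$ respectively, the assignment $a^{\prime}_{n} \mapsto a^{\prime}_{n} + d$ permutes the entries of the period, so $\mu_{j} = \mu^{\prime}_{j - d}$ for every $j$, indices taken modulo $p$. By Theorem \ref{th.2} and Definition \ref{balanced}, the exceptionality of $c^{\prime}$ in ${\bf a^{\prime}}$ means $\mu^{\prime}_{c^{\prime}} = t-1$ and $\mu^{\prime}_{j} = t$ for $j \neq c^{\prime}$, where $N = pt-1$. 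Since $c - d = c^{\prime}$, we get $\mu_{c} = \mu^{\prime}_{c^{\prime}} = t-1$ and $\mu_{j} = t$ for all $j \neq c$, so ${\bf a}$ is almost balanced with $c$ having the exceptional frequency of appearance.

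I do not expect a genuine obstacle here: the statement follows at once from the translation-invariance of the autocorrelation function and from the bijectivity of adding a constant to a sequence. The only point deserving a little care is the index bookkeeping modulo $p$, ensuring that the exceptional element is transported precisely from $c^{\prime}$ to $c^{\prime} + d = c$.
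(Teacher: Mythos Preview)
Your proof is correct and follows essentially the same approach as the paper: both observe that the constant shift $d = c - c'$ scales the complex sequence by $\omega^{d}$, which cancels in $R_{\bf s}(k)$ since $\omega^{d}(\omega^{d})^{*}=1$, and that the shift transports the exceptional frequency from $c'$ to $c$. Your write-up is in fact slightly more explicit about the multiplicity bookkeeping than the paper's.
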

\begin{proof}
Since by Theorem \ref{th.2}, ${\bf a}^{\prime}$ is an almost balanced
sequence then ${\bf a}= \{a^{\prime}_{n} - c^{\prime} + c\}, n = 0,1,\ldots$ is an
almost balanced too, but of course, instead of $c^{\prime}$ the frequency
of appearance of $c = c^{\prime} - c^{\prime} + c$ is the
exceptional one. Let ${\bf s}$ and ${\bf s}^{\prime}$ be the
complex sequences corresponding to ${\bf a}$ and ${\bf
a}^{\prime}$, respectively. Denote by $w = \omega^{c - c^{\prime}}$
Then by the definition of autocorrelation function (see,
Definition \ref{def.2}) for any $k$, we have the following:
\begin{equation*}
R_{{\bf s}}(k) = \sum_{n=0}^{N-1}s(n){s}^{*}(n+k) =  \sum_{n=0}^{N-1}(s(n)^{\prime}\; w) \times (s^{\prime\;*}(n+k)\; w^{*}) = R_{\bf s^{\prime}}(k),
\end{equation*}
since the product $ w \times  w^{*}$ equals to $1$. Thus, the
autocorrelation function of ${\bf s}$ coincides with that one
of ${\bf s}^{\prime}$ and therefore ${\bf a}$ possesses the
ideal two-level autocorrelation property as well.
\end{proof}

In other words, the above theorem states that together with any $p-$ary
sequence satisfying the ideal two-level autocorrelation property there exists
a whole one-parametric family of cardinality $p$ containing sequences of this kind
(and which are not cyclic replicas of the primary sequence).
This fact might be useful to vary, for instance, CDMA communication.

\section*{\bf Acknowledgments}
The author would like to thank Stefan M. Dodunekov, Ivan N. Landjev and Svetla Nikova for helpful discussions and comments which substantially improve the presentation of the results.

\bibliography{}

\end{document}